\documentclass[a4paper]{article}

\usepackage{amsthm}
\theoremstyle{plain}
\newtheorem{theorem}{Theorem}

\theoremstyle{definition}
\newtheorem{definition}{Definition}
\theoremstyle{remark}
\newtheorem{remark}{Remark}

\usepackage[sorting=ydnt,backend=bibtex,sortcites=true]{biblatex}
\usepackage{dmbiblatex}

\bibliography{article_Monniaux_computability_polyhedral_abstraction}

\usepackage{graphicx}
\usepackage{paralist}
\usepackage[utf8]{inputenc}
\usepackage{amsmath,amsfonts}

\newcommand{\ZZ}{\mathbb{Z}}
\newcommand{\QQ}{\mathbb{Q}}
\newcommand{\RR}{\mathbb{R}}

\newcommand{\tuple}[1]{\langle #1 \rangle}
\newcommand{\defn}{\stackrel{\triangle}{=}}
\newcommand{\calD}{\mathcal{D}}
\newcommand{\calT}{\mathcal{T}}
\newcommand{\parts}[1]{\mathcal{P}\left(#1\right)}
\newcommand{\vx}{\mathbf{x}}

\usepackage{hyperref}

\begin{document}

\title{On the decidability of the existence of polyhedral invariants in transition systems%
\thanks{This work was partially supported by the \href{http://erc.europa.eu/}{European Research Council} under the European Union's Seventh Framework Programme (FP/2007-2013) / ERC Grant Agreement nr. 306595 \href{http://stator.imag.fr}{``STATOR''.}}%
}

\author{David Monniaux\\
Univ. Grenoble Alpes, CNRS, Grenoble INP%
\thanks{Institute of Engineering Univ. Grenoble Alpes},
           VERIMAG, 38000 Grenoble\\
              \url{http://www-verimag.imag.fr/~monniaux/}}

\maketitle

\begin{abstract}
Automated program verification often proceeds by exhibiting inductive invariants entailing the desired properties.
For numerical properties, a classical class of invariants is convex polyhedra: solution sets of system of linear (in)equalities.
Forty years of research on convex polyhedral invariants have focused, on the one hand, on identifying ``easier'' subclasses, on the other hand on heuristics for finding general convex polyhedra.
These heuristics are however not guaranteed to find polyhedral inductive invariants when they exist.
To our best knowledge, the existence of polyhedral inductive invariants has never been proved to be undecidable.

In this article, we show that the existence of convex polyhedral invariants is undecidable, even if there is only one control state in addition to the ``bad'' one.
The question is still open if one is not allowed any nonlinear constraint.
\end{abstract}

\section{Introduction}
\label{sec:intro}
Methods for proving that a program can never enter some undesirable states generally rely on exhibiting an \emph{inductive invariant} $I$ --- a set of states containing the initial state(s) of the system, such that there exists no transition from $I$ to outside of $I$ --- containing no undesirable states.
If the program is safe, then the set of its reachable states is such an inductive invariant;
this set is however in general very complex and approaches based on \emph{abstract interpretation} \cite{DBLP:conf/popl/CousotC77} instead seek $I$ within a restricted class called an \emph{abstract domain}.
In the case of programs defined by a control flow graph $(V,E)$, $I$ is typically presented as a labeling of~$V$.
For instance, in \emph{interval analysis} \cite{DBLP:conf/popl/CousotC77}, each control location in is labeled either with the empty set, either with an interval for each numeric variable;
the inductiveness of such a labeling may be checked by interval arithmetic.

Interval analysis however cannot express relationships between variables.
In contrast, an example of \emph{relational analysis} is that of \emph{convex polyhedra}~\cite{Halbwachs_PhD,DBLP:conf/popl/CousotH78}: each control location is labeled with a set of linear (in)equalities over the program variables.
An important question is: given a transition system, and an undesirable control location (for instance, expressing that arithmetic overflow or out-of-bound access has occurred), is there any inductive invariant defined by labeling each control location with a convex polyhedron such that the undesirable location is labeled with the empty set, and thus proved to be inaccessible?

To our best knowledge, all the literature on the inference of polyhedral invariants has focused on two issues:
identifying ``simpler'' classes of convex polyhedra (the intervals form such a subclass), and heuristics for finding suitable (in)equalities.
Cousot \& Halbwachs' seminal work on polyhedral invariants \cite{Halbwachs_PhD,DBLP:conf/popl/CousotH78} proposed three such heuristics:
\begin{inparaenum}[i)]
\item propagating polyhedra along the edges (computing their image by the transition on the edge),
\item computing the convex hull of all incoming polyhedra if there are several incoming edges,
\item ``widening'' polyhedra by discarding unstable (in)equalities, ensuring termination of the analysis.
\end{inparaenum}
Further work included extracting potentially useful (in)equalities from the program, and less aggressive widening operators, in the hope of finding good invariants that coarser methods would have missed.

Yet none of these works considered showing that heuristics were needed in the first place, that is, showing that there is no general complete algorithm for finding inductive polyhedral invariants --- such an algorithm would, given a transition system and an undesirable state, either find an inductive invariant proving that this state is inaccessible, either find that such an invariant does not exist.

In the case of some restricted classes of polyhedra, the existence or nonexistence of a suitable invariant is decidable; this is most notably the case if a ``template'', a finite catalog of possible normal vectors for the constraints, is fixed in advance.
One may then study the complexity of this decision problem.
If the program variables are real (as opposed to integer) and the transitions on the edges are expressed by a first-order linear arithmetic formula with an existential quantifier prefix, the problem is $\Sigma^p_2$-complete~\cite{Gawlitza_Monniaux_LMCS12}.
We shall see that, even when templates are not used, it is sufficient to bound the number of constraints per polyhedron to make the problem decidable.
The issue is then what happens with an unbounded number of constraints.

In this article, we show that this problem is undecidable even for programs with one control state (plus one final ``bad state''), as long as one can use at least one nonlinear (polynomial) condition.

\section{Undecidability of the separating inductive polyhedral invariant existence problem}

\begin{definition}[Inductive separating invariant]
Let $n \geq 0$, let $\mathcal{D} \subseteq \parts{\QQ^n}$ be a class of properties, and let $T \subseteq \parts{\QQ^n \times \QQ^n}$ be a class of possible transition relations.

Let $C$ be a set of control states (including $c_i$ and $c_f$) and for all $j,k \in C$ let $\tau_{j,k} \subseteq \QQ^n \times \QQ^n$ be a transition relation defined by a formula in $T$; let $x_0 \in \QQ^n$. Decide whether there is an inductive invariant $I \in \calD^C$ such that $x_0 \in I(c_i)$ but $I(c_f) = \emptyset$ --- that is, $I$ such that
\begin{equation}\label{eqn:stability}
\forall j,k \in C ~ \forall x,x' \in \QQ^n ~ x \in I(j) \land (x,x')\in\tau_{j,k} \Rightarrow x' \in I(k)
\end{equation}
\end{definition}

Note that for $\calD = \parts{\QQ^n}$, this problem is equivalent to the negation of reachability of $c_f$ from $c_i$, thus, if $n \geq 2$ and $\calT$ includes first order linear arithmetic, this problem is undecidable, by simulation of a two-counter machine.

\begin{remark}
Consider the above definition with a real state $x \in \RR^n$.
If $\calD$ is the set of convex polyhedra over $\RR^n$ with at most $k$ constraints, and the transitions contain linear arithmetic updates and polynomial guards, the inductive separation problem is decidable.
\end{remark}

\begin{proof}
The stability condition (\ref{eqn:stability}) is then a formula in the theory of real closed fields, with free variables the coefficients of the $k$ polyhedral constraints.
We thus reduce the problem of the existence of the invariant to that of satisfiability of such a formula, which is decidable.
\end{proof}

\begin{theorem}
If $\calD$ is the set of convex polyhedra over $\QQ^n$, and the transitions contain (deterministic) linear arithmetic assignments and polynomial guards, the inductive separation problem is undecidable.
\end{theorem}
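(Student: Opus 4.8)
The plan is to reduce an undecidable problem about Diophantine equations (or two-counter machines) to the existence of a separating polyhedral invariant. The key insight I would exploit is the contrast with the decidable Remark: over $\RR^n$ with a \emph{bounded} number of constraints, the problem is decidable via real closed fields; over $\QQ^n$ with an \emph{unbounded} number of constraints the arithmetic becomes rich enough to encode undecidable number-theoretic facts. So the separating polyhedron we hunt for must be one whose very \emph{existence} as a rational polyhedron (with arbitrarily many facets) is equivalent to some undecidable arithmetic statement. The natural target is Hilbert's tenth problem: the solvability of a Diophantine equation $P(m_1,\dots,m_k)=0$ over the nonnegative integers is undecidable.

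\medskip

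First I would design a transition system on $\QQ^n$ with a single ``live'' control state $c$ (plus the bad state $c_f$), an initial point $x_0$, and transitions built from deterministic linear assignments and polynomial guards. The idea is to force any inductive invariant $I(c)$ to contain a prescribed infinite set of rational points --- for instance the orbit of $x_0$ under the linear dynamics --- while a polynomial guard routes into $c_f$ exactly those states that witness a solution of the encoded Diophantine equation. Concretely, I would arrange that the forward orbit of $x_0$ sweeps out a countable set of points tracking candidate tuples $(m_1,\dots,m_k)$, so that $I(c)$ is obliged to be a convex polyhedron containing all these orbit points, and the edge $c \to c_f$ fires (violating $I(c_f)=\emptyset$) precisely when the polynomial guard $P(\cdot)=0$ is met. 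The separating invariant then exists if and only if the bad state is genuinely unreachable, i.e.\ iff $P$ has \emph{no} solution --- but I must additionally guarantee that whenever the bad state is unreachable, the reachable set is \emph{captured by some convex polyhedron} over $\QQ^n$, which is the crux.

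\medskip

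The technical heart of the construction is therefore the interplay between convexity and the polynomial guard. Because $\calD$ consists of convex polyhedra, I cannot simply take the reachable set as the invariant unless that set is itself polyhedral; the art is to shape the dynamics so that the convex hull of the reachable orbit is a polyhedron that still avoids the guard region whenever no Diophantine solution exists. A promising device is to keep one coordinate quadratic in the others along the orbit --- e.g.\ maintain an invariant relation like $y = x^2$ on the orbit points --- so that the \emph{convex hull} of these points is bounded by finitely or countably many supporting lines (a polyhedron with unboundedly many facets, tangent to the parabola), and to let the polynomial guard carve out exactly the region strictly above or on the parabola that corresponds to a zero of $P$. Convexity then does the work: a rational polyhedron separating the orbit from the bad region exists iff the parabola-hull stays clear of the guard, which I would make equivalent to $P$ having no nonnegative integer root.

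\medskip

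The main obstacle I anticipate is precisely this \emph{completeness} direction: proving that whenever the Diophantine equation is \emph{unsolvable}, a convex polyhedron with the required (possibly infinitely many, or at least unboundedly many) rational constraints genuinely exists and is inductive. Establishing the reduction's ``if unreachable then a polyhedral invariant exists'' half demands that the reachable states always admit a convex polyhedral over-approximation that is both inductive and separating --- this is where the unbounded-constraints regime is essential and where the nonlinear (polynomial) guard is indispensable, matching the theorem's hypothesis and explaining why the purely linear case is left open. I would expect to verify the easy direction (a solution yields reachability of $c_f$, hence no separating invariant) by direct orbit computation, and to spend the bulk of the argument on the convex-geometric construction certifying the hard direction.
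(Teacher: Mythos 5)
You have the right raw material --- a parabola maintained by linear updates and a polynomial guard --- but the reduction you sketch aims at the wrong equivalence, and the step you yourself identify as the crux does fail. You want ``a separating polyhedral invariant exists iff $c_f$ is unreachable iff $P$ has no solution.'' For the class of \emph{all} subsets that equivalence holds, but for convex polyhedra the two sides come apart: when $P$ has no solution you must still exhibit a convex polyhedron --- the solution set of a \emph{finite} system of linear inequalities --- that contains the entire infinite orbit, is inductive, and keeps clear of the guard. Your proposed fix, that the convex hull of the orbit points on the parabola is ``a polyhedron with unboundedly many facets, tangent to the parabola,'' is a contradiction in terms: infinitely many points in strictly convex position are each extreme points of their convex hull, and a polyhedron has only finitely many extreme points, so that hull is never a polyhedron. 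In the unsolvable case your construction therefore generically admits \emph{no} polyhedral invariant, and the biconditional you need does not hold.

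The paper turns this exact obstruction to its advantage by reducing from the \emph{halting} problem rather than from reachability. It adds a clock $t$ and a register $y$ updated by $y := y + t + 1$, so the run stays exactly on $y = (t^2+t)/2$; every ordinary transition is conjoined with the guard $y = (t^2+t)/2$, and the guard $y < (t^2+t)/2$ leads to the bad state. Separation forces each invariant polyhedron to lie weakly above the parabola, so every run point, lying on the parabola, must be a vertex. If the machine halts, the run is finite and the convex hull of the finitely many run points per control state is a genuine bounded polyhedron, inductive because the equality guard discards every hull point strictly above the parabola. If the machine does not halt, some control state recurs infinitely often, its polyhedron would need infinitely many vertices, and no invariant exists --- even though the bad state is never concretely reachable. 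That last clause is precisely the gap between ``unreachable'' and ``polyhedrally separable'' that your proposal conflates. To repair your argument, replace Hilbert's tenth problem by the halting problem and make finiteness of the orbit, rather than emptiness of the guard's solution set, the property the parabola detects.
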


\begin{proof}
We reduce the halting problem for a deterministic machine over $m$ registers and $s$ states, with linear updates and linear (or polynomial) arithmetic constraints on the transitions, to the problem of existence of a convex polyhedral inductive separating invariant for a machine with $m+2$ registers and $s+1$ states, again with linear updates and polynomial guards on the transitions.

The target machine is obtained from the source as follows:
\begin{itemize}
\item the control states are the same, to which we add a special ``bad state''~$\sigma_b$
\item the first $m$ registers encode the registers of the source machine
\item another register, called $t$, is a time counter: initialized to $0$, it is incremented at every step
\item another register, called $y$, is initialized to $0$ and updated as $y := y+t+1$
\item the transitions are the same as those of the source machine, conjoined with a $y = (t^2+t)/2$ guard ($y \leq (t^2+t)/2$ also works), to which we add transitions from all states to $\sigma_b$ with the guard $y < (t^2+t)/2$.
\end{itemize}

\paragraph{If the source machine terminates, then there exists a bounded polyhedral inductive invariant}
Let $(\sigma_i,\vx_i)_{1 \leq i \leq n}$ be the run of the source machine, when $\sigma_i$ is the control state and $\vx_i$ the vector of registers at time~$i$.
The run of the target machine is thus $(\sigma_i,\vx_i,i,(i+i^2)/2)_{1 \leq i \leq n}$.

For each control state $\sigma$, consider the bounded polyhedron obtained as the convex hull of those $(\sigma, \vx_i,i,(i+i^2)/2)_{1 \leq i \leq n}$ from the run (this polyhedron is thus empty if and only if this control state is unreachable).
This polyhedron lies above the $y \geq (t^2+t)/2$ parabola;
its points are partitioned into, on the one hand, the $(\sigma_i,\vx_i,i,(i+i^2)/2)_{1 \leq i \leq n}$, on the other hand, points satisfying $y > (t^2+t)/2$ and thus excluded from the transitions by the $y = (t^2+t)/2$ guard.

The collection of these polyhedra thus forms an inductive invariant, with $\sigma_b$ labeled with the empty polyhedron.

\paragraph{If there exists a polyhedral separating invariant, then the source machine terminates}
If any point in the polyhedra lies below $y = (t^2+t)/2$, then $\sigma_b$ is reachable and thus the invariant is not separating;
and any point on $y = (t^2+t)/2$ must be a vertex.

Suppose an infinite run $(\sigma_i,\vx_i,i,(i+i^2)/2)_{1 \leq i}$, then there exists at least one of the control states $\sigma$ that repeats infinitely.
The corresponding $(\sigma_i,\vx_i,i,(i+i^2)/2)$ are vertices of the polyhedron associated to $\sigma$, but there is an infinity of them, which is absurd.\qed
\end{proof}

The same theorem and proof are valid for the set of bounded convex polyhedra over $\QQ^n$, the sets of bounded and unbounded convex polyhedra over $\ZZ^n$.

\section{Reduction to two control states}
We consider a transition system operating over $d \geq 1$ integer (resp. rational) variables and $N \geq 1$ control points. $0$ shall be the initial control point and $N-1$ is the final ``bad state''.

The state of the system consists in $\tuple{q,x_1,\dots,x_d}$ where $0 \leq q < N$ is a control state and $x_1,\dots,x_d$ are integer (resp. rational) variables.
The initial state is $\sigma_0 \defn \tuple{0, 0, \dots, 0}$.
The set of states $\Sigma$ is thus $\{ 0, \dots, N-1 \} \times \ZZ^d$ (resp. $\{ 0, \dots, N-1 \} \times \QQ^d$).
In the $\tau$ and $B$ formula, we request, without loss of generality, the only atomic propositions involving $q$ to be of the form $q = k$ where $0 \leq k < N$ is a constant.

\begin{theorem}
The inductive separating polyhedral invariant problem for $N$ control states in dimension $d$ reduces to the inductive separating polyhedral invariant problem for $2$ control states in dimension $d+(N-1)$.
\end{theorem}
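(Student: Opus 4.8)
The plan is to merge the $N-1$ non-bad control states $0,\dots,N-2$ into a single control state, say $\bullet$, while keeping the bad state $\sigma_b = N-1$ separate, so that the target system has exactly two control states. To remember which original state we are in, I would introduce $N-1$ fresh rational (resp. integer) variables $z_0,\dots,z_{N-2}$ and use a \emph{one-hot} encoding: being in original state $k$ is represented by the continuous state $\tuple{\vx, e_k}$, where $e_k$ is the $k$-th unit vector of $\QQ^{N-1}$. The initial state $\tuple{0,0,\dots,0}$ becomes $\tuple{0,\dots,0,e_0}$. The transitions are rewritten as a guarded disjunction: an original transition from $j$ to $k$ (with $j,k \leq N-2$) becomes a self-transition of $\bullet$ guarded by $z = e_j$, whose update performs the original register update and additionally sets $z := e_k$; an original transition from $j$ to the bad state becomes a transition $\bullet \to \sigma_b$ guarded by $z = e_j$ conjoined with the original guard. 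Since the source system is deterministic and the $z=e_j$ guards are mutually exclusive, the target system is again deterministic with linear updates and polynomial guards, and the only atomic propositions on the control component distinguish $\bullet$ from $\sigma_b$.

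For the \emph{soundness} direction (a separating invariant for the two-state system yields one for the original) I would simply slice. Given a target invariant labelling $\bullet$ with a polyhedron $Q$ and $\sigma_b$ with $\emptyset$, define $P_k \defn \{\vx \in \QQ^d : \tuple{\vx,e_k} \in Q\}$ for $0 \leq k \leq N-2$ and $P_{N-1} \defn \emptyset$. Each $P_k$ is a polyhedron, being the restriction of $Q$ to the affine subspace $z = e_k$ read off as a system in $\vx$. The initial condition $\tuple{0,\dots,0,e_0} \in Q$ gives $0 \in P_0$; and because every target transition out of $\bullet$ carries the guard $z = e_j$, the inductiveness obligation for an original transition $j \to k$ is literally an instance of the inductiveness of $Q$ at the point $\tuple{\vx, e_j}$, so $(P_k)_k$ is inductive; the same argument applied to the transitions into $\sigma_b$ shows it is separating. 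This direction needs no hypothesis on boundedness.

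For the \emph{completeness} direction I would lift and take a convex hull. Given an original separating invariant $(P_k)_k$, set $\hat P_k \defn P_k \times \{e_k\}$ and let $Q \defn \mathrm{conv}\!\left(\bigcup_{k=0}^{N-2} \hat P_k\right)$, labelling $\sigma_b$ with $\emptyset$. The crucial feature of the one-hot encoding is that slicing recovers the pieces exactly: in any convex combination whose $z$-component equals the vertex $e_k$ of the simplex $\mathrm{conv}(e_0,\dots,e_{N-2})$, all the weight must sit on $\hat P_k$, so $Q \cap \{z = e_k\} = \hat P_k$. Consequently points of $Q$ with $z$ strictly inside the simplex satisfy no transition guard and are irrelevant, and the verification that $Q$ is inductive and separating is the mirror image of the slicing argument.

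The hard part will be showing that $Q$ is a genuine (closed) convex polyhedron. The convex hull of finitely many \emph{polytopes} is again a polytope, so when $\calD$ is the class of bounded polyhedra the construction goes through verbatim, and this is precisely the setting in which the reduction of the previous theorem produces its invariants. For unbounded polyhedra the construction can fail: the convex hull of a union of polyhedra need not be closed, and in fact a recession-cone argument shows that a single polyhedron cannot have two parallel slices with different recession cones (a recession direction of one slice lies in $\mathrm{rec}(Q)$ and hence forces the same direction into every nonempty slice). Thus for the unbounded case one must either restrict attention to the bounded subclass, or first replace $(P_k)_k$ by an equivalent invariant whose pieces share a common recession cone; I would flag this closedness issue as the main obstacle and resolve it via boundedness, which suffices to transport the undecidability result to two control states.
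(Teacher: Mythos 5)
Your construction is essentially the one in the paper: the control states are encoded as affinely independent vertices of a simplex in $N-1$ fresh coordinates, the guards $q=k$ are rewritten as $z=e_k$, and the two directions of the equivalence are handled by taking a convex hull and by slicing at the vertices (the paper encodes $q=0$ as the origin and the remaining states as unit vectors, a cosmetic difference from your one-hot scheme). The one place where you go beyond the paper is worth keeping: the paper forms $\mathrm{conv}\bigl(\bigcup_q P_q \times \{e_q\}\bigr)$ without comment, whereas you correctly observe that for unbounded $P_q$ this hull need not be a closed polyhedron, that passing to its closure can enlarge the vertex slices, and that a recession-cone argument even forbids any single polyhedron from having the prescribed slices when the $P_q$ have distinct recession cones. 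Your fix --- the slicing direction is unconditional, and the hull direction only needs to succeed for bounded invariants, which are exactly what the previous theorem's construction produces --- is sound and suffices to transport the undecidability result to two control states.
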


\begin{proof}
  In the new system of transitions, a state consists in $\tuple{x_1, \dots, x_d, y_1, \dots, y_{N-1}}$.
  The variables $x_1, \dots, x_d$ encode the registers of the original system.
  The variables $y_1, \dots, y_{N-1}$ indicate which control state $q$ of the original system is simulated by encoding it into a vertex of a simplex: $q=0$ is encoded as $\tuple{0, \dots, 0}$, and $q > 0$ is encoded into $e_q = \tuple{\underbrace{0, \dots, 0}_{q-1 \text{~times}}, 1, 0, \dots, 0}$.

  The transition relation $\tau'$ the new system is obtained by replacing $q=k$ in each formula by $\tuple{y_1, \dots, y_{N-1}}=e_q$ where $e_q$ is the encoding of control state~$q$.

  The target system simulates concretely the source system, and conversely; the concrete reachability problems are equivalent.

Let us now see that any polyhedral inductive invariant within the source system (one polyhedron per control location) can be translated into a polyhedral invariant within the target system (one polyhedron, for the single control location), and the converse.

  A polyhedral inductive invariant within the source system consists in one polyhedron $P_q$ per control location $q$; it is translated into a polyhedron obtained as the convex hull of $\bigcup_q P_q \times \{ e_q \}$.
  Conversely, a polyhedral invariant $P$ in the target system is translated into the polyhedral invariant defined as $P_q = \{ \tuple{x_1,\dots,x_d} \mid (\tuple{x_1,\dots,x_d},e_q) \in P \}$. 
\end{proof}

\section{Perspectives}
Most studies of the computability and complexity issues in hardware and software verification have focused on the (concrete) reachability problem for various classes of transition systems.
Here, we instead consider the problem that abstract interpreters (or other classes of tools, for instance those based on counterexample-guided abstraction refinement) solve heuristically: finding an invariant within a given class; or, to make it a decision problem, deciding the existence of such an invariant.%
\footnote{For most abstract domains, it is possible to find a suitable invariant if one is known to exist, by enumerating all candidate invariants and stopping when an inductive separating one is found. This justifies the use of the existence decision problem for the purposes of theoretical computability studies.}

We started this work hoping to vindicate forty years of research on heuristics by showing that the existence of polyhedral inductive separating invariants in a system with transitions in linear arithmetic (integer or rational) is undecidable;
we had to settle for a weaker result (involving nonlinear arithmetic).
The original question is still open; we have discussed it with experts, to no avail.

\printbibliography
\end{document}